\numberwithin{equation}{section}
\theoremstyle{plain} 
\newtheorem{theorem}{Theorem}[section]      
\newtheorem{assumption}{Assumption}         
\newtheorem{corollary}[theorem]{Corollary}  
\newtheorem{algorithm}{Algorithm}[section]  
\theoremstyle{definition}
\newtheorem{definition}{Definition}
\newtheorem{example}{Example}
\newcommand{\Prob}{\mathsf{P}}
\newcommand{\Expect}{\mathsf{E}}
\newcommand{\far}{\mathsf{FAR}}
\newcommand{\pdc}{\mathsf{PDC}}
\newcommand{\wadd}{\mathsf{WADD}}
\newcommand{\dd}{\textnormal{D}}
\DeclareMathOperator*{\esssup}{ess\,sup}
\newcommand{\mg}{\mathcal{G}}
\newcommand{\msi}{\mathscr{I}}
\newcommand{\kl}{\textnormal{D}_{\textnormal{KL}}}
\title{Robust Quickest Change Detection with Sampling Control}
\author{Yingze Hou$^a$, Hoda Bidkhori$^b$ and 
Taposh Banerjee$^a$\thanks{CONTACT: Taposh Banerjee. Email: taposh.banerjee@pitt.edu}
    \hspace{.2cm}\\
    $^a$ Industrial Engineering, University of Pittsburgh\\
    $^b$ Computational and Data Sciences, George Mason University}
\date{}
\begin{document}

\maketitle
\begin{abstract}
The problem of quickest detection of a change in the distribution of a sequence of random variables is studied. The objective is to detect the 
change with the minimum possible delay, subject to constraints on the rate of false alarms and the cost of observations used in the decision-making process. The post-change distribution of the data is known only within a distribution family. It is shown that if the post-change family has a distribution that is least favorable in a well-defined sense, then 
a computationally efficient algorithm can be designed that uses an on-off observation control strategy to save the cost of observations. In addition, the algorithm can detect the change robustly while avoiding unnecessary false alarms. It is shown that the algorithm is also asymptotically robust optimal as the rate of false alarms goes to zero for every fixed constraint on the cost of observations. The algorithm's effectiveness is validated on simulated data and real public health data. 
%
\end{abstract}

\noindent%
{\it Keywords:}  Change detection, Data-efficient, Observation control, Robust optimality

\vspace{-0.3cm}
\section{Introduction}
In the Quickest Change Detection (QCD) problem for independent and identically distributed (i.i.d.) data, a decision maker observes a sequence of random variables, $\{X_n\}$. Before a change point, denoted by $\nu$, these random variables are i.i.d. with a fixed density $f$. Following the change at $\nu$, the sequence ${X_n}$ becomes i.i.d. with a different density $g$. The objective of QCD is to detect the shift in distribution from $f$ to $g$, as quickly as possible, while avoiding false alarms. As a result, stochastic optimization problems are solved to find a QCD algorithm, where the goal is to minimize a metric on the average detection delay subject to a constraint on a metric of false alarms. 
The QCD problem has applications in event detection or anomaly detection problems in finance, sensor networks, statistical process control, and public health  (\cite{poor-hadj-qcd-book-2009, veer-bane-elsevierbook-2013, tart-niki-bass-2014})

In the Bayesian setting of the QCD problem, where the change point $\nu$ is treated as a random variable with a known prior distribution, the optimal solution is obtained by minimizing the average detection delay (averaged over the prior on the change point) subject to a constraint on the probability of false alarm. The optimal solution in this setting is the Shiryaev test where a change is declared the first time the \textit{a-posteriori} probability that the change has already occurred crossed a well-designed threshold (\cite{shir-siamtpa-1963, tart-veer-siamtpa-2005}). Some recent results on QCD in the Bayesian setting can be found in (\cite{veer-bane-elsevierbook-2013, tart-niki-bass-2014, tart-book-2019, bane-tit-2021, guo2023bayesian, naha2024bayesian, hou2024robust}).

Without prior knowledge of the distribution for the change point $\nu$, the change point is regarded as an unknown constant. In this non-Bayesian setting, the concept of conditional delay (conditioned on the change point) is introduced, with the average detection delay depending on the timing of the change. Consequently, a minimax approach is adopted, to minimize the worst-case delay across all possible change points. The QCD problem is examined under two different minimax frameworks in (\cite{poll-astat-1985}) and (\cite{lord-amstat-1971}).
An optimal solution in this setting, for both minimax formulations, is the Cumulative Sum (CUSUM) algorithm, where the alarm is raised when an accumulated version of the log-likelihood ratios crosses a threshold
(\cite{page-biometrica-1954, lord-amstat-1971, mous-astat-1986, lai-ieeetit-1998}). A comprehensive review of the literature, along with recent advances in the non-Bayesian setting, can be found in (\cite{veer-bane-elsevierbook-2013, tart-niki-bass-2014, tart-book-2019, liang2022quickest, brucks2023modeling}).

When the post-change distribution is unknown, optimal algorithms are generally developed using three main approaches: (1) generalized likelihood ratio (GLR) tests, in which we replace the unknown post-change parameter with its maximum likelihood estimate (\cite{lord-amstat-1971, lai-ieeetit-1998, tart-niki-bass-2014, lau2018binning, tart-book-2019}); (2) mixture-based tests, in which we assume a prior model for the post-change parameters and integrate the likelihood ratio over this prior (\cite{poll-astat-1987, lai-ieeetit-1998, tart-niki-bass-2014, tart-book-2019}); and (3) robust tests, in which we design optimal tests using a least favorable distribution (\cite{unni-etal-ieeeit-2011, oleyaeimotlagh2023quickest, hou2024robust}). Among these, only the robust approach yields test statistics that are computationally efficient and can be calculated recursively.

In several applications of QCD, acquiring data or observations for decision-making is costly. The cost could be associated with the labor required to conduct experiments and collect data, the loss of revenue while performing a particular test (e.g., destructive testing), or the cost of computation, energy, or battery to process observations. This issue is amplified when changes occur rarely. In traditional statistical process control, the cost of taking observation is minimized by performing sampling control (\cite{reynolds2004control}). In a series of papers (\cite{banerjee2012data, banerjee2013data, banerjee2015data}), the problem of QCD with sampling control has been studied. In these works, optimal algorithms were obtained to minimize the detection delay, subject to constraints on the rate of false alarms and the average number of observations used before the change point. 
An optimal QCD algorithm with sampling control in these papers has two thresholds. The higher threshold is used for stopping and the lower threshold for on-off observation control. One remarkable fact proven in these papers is that these two-threshold algorithms have the same asymptotic delay and false alarm performances as the classical QCD algorithms without sampling control. In addition, one can achieve any arbitrary, but fixed, constraint on the sampling control.


One key limitation in the works \cite{banerjee2012data, banerjee2013data} is that to achieve optimality, one needs to know the precise pre- and post-change densities. This issue was partially addressed in the follow-up work \cite{banerjee2015data} where it was assumed that the post-change law is unknown and a GLR or a mixture approach was taken for QCD with sampling control. As discussed earlier, one major limitation of a GLR-based algorithm is that, although it allows us to achieve uniform optimality over all possible post-change parameters, it is computationally hard to implement. 

In this paper, we take a robust approach to QCD with sampling control when the post-change distribution is unknown. We obtain a computationally efficient algorithm using which we can avoid false alarms and save the cost of observations. In addition, the algorithm is robust in the sense that, under certain stochastic boundedness assumptions, it can detect all possible post-change scenarios. We also formally prove the robust optimality of the algorithm for a robust problem formulation.


We summarize our contributions here:
\begin{enumerate}
    \item We propose a robust QCD algorithm capable of sampling control where the post-change distribution is unknown. We design the algorithm using the notion of least favorable law (LFL) to be made precise below. See Section~\ref{sec:Algorithm}. 
    \item We show that the proposed algorithm can consistently detect all post-change scenarios. We also provide design guidelines to satisfy constraints on false alarms and the cost of observations. See Section~\ref{sec:Optimality}. 
    \item We propose a robust QCD formulation taking into account sampling control and show that our algorithm is asymptotically optimal with
respect to the proposed problem formulation. See Section \ref{sec:ProblemFormulation} and Section~\ref{sec:Optimality}. 
    \item We demonstrate the effectiveness of the algorithm on both simulated and real data. In particular, when applied to detect the onset of a pandemic, the robust algorithm, which selectively skips observations, showed nearly comparable performance to the robust CUSUM test that does not skip observations (see Section \ref{sec:Numerical}).
\end{enumerate}


\section{Problem Formulation}
\label{sec:ProblemFormulation}
We observe a sequence of random variables $\{X_n\}$ with the following property: 
\begin{equation}\label{eq:iidQCD}
	X_n \sim
	\begin{cases}
		f, &\quad \forall n < \nu, \\
		g, &\quad \forall n \geq \nu.
	\end{cases}
\end{equation}
Thus, before a time $\nu$, the random variables have density $f$ (with distribution $F$) and after $\nu$, the random variables have density $g$ (with distribution $G$). The variables are independent conditioned on the time $\nu$. 
The density $g$, or equivalently the distribution $G$, is unknown. However, we assume that $G$ belongs to a known family of distributions $\mathcal{G}$:
$$
\mathcal{G} = \{G: \text{$G$ is a possible post-change distribution}\}. 
$$

We wish to detect the change in density from $f$ to $g$ as quickly as possible while avoiding false alarms. However, not all observations can be used for decision-making because there is a cost associated with collecting observations. To control the cost of observations, we choose the mechanism of on-off observation control. It will be shown later in the paper that on-off observation controls are sufficient to achieve strong optimality results. Let $M_n$ be the indicator random variable such that 
\begin{align*}
    M_n = 
    \begin{cases}
	1, &\quad  \textnormal{if $X_n$ is used for decision-making}, \\
	0, &\quad  \textnormal{otherwise}.
    \end{cases}
\end{align*}
The information available at time $n$ is denoted by
$$\mathscr{I}_n = \left[M_1, \dots, M_n, X_1^{(M_1)}, \dots, X_n^{(M_n)}\right],$$
where $X_i^{(M_i)} = X_i$ if $M_i = 1$, otherwise $X_i$ is absent from the information vector $\mathscr{I}_n$. Also, $\mathscr{I}_0$ is an empty set.
The control $M_{n+1}$ is a function of the information available at time $n$. Namely,
$$M_{n+1} = \phi_{n+1}(\mathscr{I}_n),$$
where $\phi_{n+1}$ is some possibly time-dependent function. Thus, the decision on whether to take the observation at time $n+1$ is taken based on information available at time $n$. 

When a change is detected, the observation process is stopped. This time of stopping $\tau$ is also a part of the decision-making process. We choose $\tau$ as a stopping time on the information sequence $\{\mathscr{I}_n\}$. This means that the indicator function $\mathbbm{1}_{\{\tau = n\}}$ will be a function of $\msi_n$. More formally, the event $\{\tau \leq n\}$ is part of the sigma-algebra generated by $\msi_n$. 
Our policy for QCD with sampling control is
\begin{equation*}
    \Psi = \{\tau, \phi_1, \dots, \phi_{\tau}\}.
\end{equation*}

To find an optimal policy, we now define the metrics on delay, false alarm, and the cost of observations.  
For delay, we consider the minimax formulation of (\cite{lord-amstat-1971}):
\begin{equation}
    \label{eq:WADD G}
    \wadd^{G}(\Psi) \coloneqq \sup_k \esssup \Expect_k^{G}[(\tau - k + 1)^+ | \mathscr{I}_{k-1}].
\end{equation}
Here, we use $\Prob_{n}^{G}$ to denote the law of the observation process $\{X_{n}\}$ when the change occurs at $\nu=n$ and the post-change law is given by $G$, and $\Expect_{n}^{G}$ for the corresponding expectation. When there is no change (change occurring at $\nu=\infty$), we use $\Prob_{\infty}^{G}=\Prob_{\infty}$ and $\Expect_{\infty}^{G}=\Expect_{\infty}$ to denote the correspond measure and expectation.
The essential supremum operation $\esssup X$ gives the smallest constant $C$ such that $\Prob(X \leq C)=1$. 
The supremum over $k \geq 1$ is taken, since the change point $\nu = k$ is treated as an unknown constant.
Therefore, we consider the worst-case delay over the change points and past realizations.
While it may appear overly pessimistic, $\wadd$ is the only delay metric studied in the literature for which strong optimality results have been developed in both minimax and robust settings (\cite{unni-etal-ieeeit-2011, hou2024robust, bane-hou-multinonstationary}).


For the false alarm metric, without prior statistical knowledge of the change point, we consider the mean-time to a false alarm (assuming no change happens) or its reciprocal, the false alarm rate (\cite{lord-amstat-1971}):
\begin{equation}
    \label{eq:FAR}
    \far(\Psi) \coloneqq \frac{1}{\Expect_{\infty}[\tau]}.
\end{equation}
For a metric on the cost of observations, we use the Pre-change Duty Cycle ($\pdc$) metric first introduced in (\cite{banerjee2013data}):
\begin{equation}
    \label{eq:PDC}
    \pdc(\Psi) \coloneqq \limsup_k \frac{1}{k}\Expect_{k}^G\left[\sum_{i = 1}^{k-1} M_i \bigg| \tau \geq k\right]
    = \limsup_k \frac{1}{k}\Expect_{\infty}\left[\sum_{i = 1}^{k-1} M_i \bigg| \tau \geq k\right].
\end{equation}
The $\pdc$ is the average of the fraction of observations used before the change point $\nu=k$, $\frac{1}{k}\sum_{i = 1}^{k-1} M_i$ when the change occurs at a far horizon $(k \to \infty)$, and there is no false alarm $(\tau \geq k)$. Clearly, $\pdc \leq 1$. 
The expectation $\Expect_{\infty}$ is used 
because the fraction $\frac{1}{k}\sum_{i = 1}^{k-1} M_i$ only depends on the data used before the change point $\nu=k$. 

The minimax formulation studied in \cite{banerjee2013data}, where the post-change law $g$ is assumed known, is the following:
\begin{equation}
\label{problem 2 G}
    \begin{aligned}
        \inf_{\Psi} \ & \wadd^{G}(\Psi)\\
        \text{subject to } \; & \far(\Psi) \leq \alpha,\\
         \text{and }   \;  & \pdc(\Psi) \leq \beta,
    \end{aligned}
\end{equation}
where $\alpha$ is a constraint on the false alarm rate and $\beta$ is a constraint on the $\pdc$. 
With $\beta = 1$, problem \eqref{problem 2 G} reduces to the minimax formulation of \cite{lord-amstat-1971}.

However, since the post-change law $G$ is not exactly known to the decision-maker, we consider a robust version of the problem:
\begin{equation}
\label{problem 2 G beta}
    \begin{aligned}
        \inf_{\Psi}  \sup_{G \in \mg} \ & \wadd^{G}(\Psi)\\
        \text{subject to } \; & \far(\Psi) \leq \alpha,\\
         \text{and }    & \pdc(\Psi) \leq \beta.
    \end{aligned}
\end{equation}
We say that a policy $\Psi  = \{\tau, \phi_1, \dots, \phi_{\tau}\}$ is robust optimal if it is a solution to the problem in \eqref{problem 2 G beta}. We say that a policy is asymptotically robust optimal if it is a solution to the problem in \eqref{problem 2 G beta}, as $\alpha \to 0$, for a fixed $\beta$. 

\section{Algorithm for Robust QCD with Adaptive Sampling Control}
\label{sec:Algorithm}
In this section, we propose a computationally efficient QCD algorithm with sampling control that can be implemented without precisely knowing the post-change density $g$. In the next section, we will show that the proposed algorithm is asymptotically robust optimal for the problem in \eqref{problem 2 G beta}. To define the algorithm, we need the following definitions.

\begin{definition}[Stochastic boundedness]\label{def Stochastic boundedness}
Consider random variable $Z_1$ with density $f_1$ and random variable $Z_2$ with density $f_2$. We say $f_{1}$ is stochastically bounded by $f_{2}$, and use the notation $f_2 \succ f_1$ or $Z_2 \succ Z_1$ if 
$$
\Prob(Z_2 \geq t) \geq \Prob(Z_1 \geq t), \quad \forall t \in \mathbb{R}.
$$
\end{definition}

\begin{definition}[Least Favorable Law (LFL)]\label{assumption LFL}
We say the family of post-change laws $\mathcal{G}$ is stochastically bounded by a law $\bar G \in \mathcal{G}$ if
\begin{align*}
    \mathcal{L}\left(\log\frac{\bar g(X)}{f(X)}, G\right) \succ \mathcal{L}\left(\log\frac{\bar g(X)}{f(X)}, \bar G\right), \quad \forall G \in \mathcal{G},
\end{align*}
where $\mathcal{L}(\phi(X), G)$
denotes the law or distribution of $\phi(X)$ when $X$ has distribution $G$. We refer to $\bar G$ or its density $\bar g$ as the least favorable law. 
 \end{definition}

\begin{definition}
The Kullback-Leibler divergence between two densities $g$ and $f$ is defined as
$$
\dd_{\textnormal{KL}}(g \; \| \; f) := \int g(x) \log \frac{g(x)}{f(x)} dx.
$$    
\end{definition}

 \clearpage
\begin{assumption}
In the rest of the paper, we will make the following assumptions:
\label{assumption1}
\begin{enumerate}
    \item An LFL $\bar g$ exists. 
    \item All the likelihood ratios appearing in the paper are continuous and have finite positive moments up to second order. 
\end{enumerate}
    
\end{assumption}

We now define our algorithm, called the Robust Data-Efficient Cumulative Sum (RDE-CUSUM) algorithm, using the LFL $\bar g$.   

\begin{algorithm}[RDE-CUSUM Policy or Algorithm: $\bar{\Psi}_{\text{RDC}}(A, \mu, h)$]
\label{alg: robust DE-CUSUM}
Fix $A > 0$, $\mu > 0$, and $h \geq 0$. The \textup{RDE-CUSUM} statistic $\{\bar D_n\}$ and the corresponding sampling control $\{\bar M_n\}$ are defined as follows:  
\begin{enumerate}
    \item Sampling control: 
    \begin{equation}
        \bar M_{n+1} = \begin{cases}
            1, \quad \text{if } \bar D_n \geq 0, \\
            0, \quad \text{if } \bar D_n < 0. 
        \end{cases}
    \end{equation}
    Thus, we use the variable $X_{n+1}$ only when the past statistic $\bar D_n$ is non-negative. 
    \item Statistic calculation: Start with $\bar D_0 = 0$ and calculate $\{\bar D_n\}$ as follows: 
    \begin{equation}
    \label{modified D_n_theta_star_1}
        \bar D_{n+1} = \begin{cases}
            \left( \bar D_n + \log\frac{\bar g(X_{n+1})}{f(X_{n+1})}\right)^{h+}, \quad \text{if } \bar M_{n+1} = 1, \\
         \min\{\bar D_{n} + \mu, 0\}, \quad \quad \quad \quad \text{if } \bar  M_{n+1} = 0. 
        \end{cases}
    \end{equation}
    When $X_{n+1}$ is used for decision making ($\bar M_{n+1}=1$), we update the test statistic using the usual CUSUM update, except we allow it to go below zero through the operation $(x)^{h+}:=\max\{x, -h\}$. If the statistic $\bar D_{n+1}$ goes below zero because of this operation, the next sample will be skipped.  
    When a sample is skipped, the statistic is updated by adding a carefully chosen constant $\mu$ to the current value of the statistic. 
    \item Stopping rule: Stop at 
\begin{align}\label{tau M_n}
    \bar\tau_{\text{rdc}} = \inf\{n \geq 1 : \bar D_n \geq A\}.
\end{align}
Thus, we stop the first time the statistic $\bar D_n$ crosses a carefully designed threshold $A$. 
\end{enumerate}
\end{algorithm}

Overall, the algorithm works as follows. We start with $\bar D_0 = 0$. Thus, the first sample is always taken $M_1=1$ and the statistic is updated using  $\bar D_{1} =
            \max\{ \bar D_0 + \log[\bar g(X_{1})]/f(X_{1})], {-h} \}$.
            As long as the statistic $\bar D_n \geq 0$, this update is repeated for every collected sample. Once $\bar D_n$ goes below zero, a number of consecutive samples are skipped based on the undershoot of the statistic. This undershoot is truncated at $-h$ for mathematical convenience (although $h$ can be any arbitrary but fixed value). Since the undershoot distribution depends on the given pre- and post-change laws, it cannot be directly controlled. Thus, to control the average number of observations used, a parameter $\mu > 0$ is used to skip $\lceil \textit{undershoot}/\mu \rceil$ number of observations, where $\lceil x\rceil$ represents the smallest integer bigger than $x$. 
           Thus, the smaller the value of $\mu$, the larger the number of consecutive samples skipped for any fixed undershoot value. As with most standard change detection algorithms, a change is declared when the statistic is above some threshold $A$—a large value of $A$ results in a lower value of the rate of false alarms. We will later show that the constraints on $\far$ and $\pdc$ can be met approximately independently of each other.

We now compare the structure of the RDE-CUSUM algorithm with other standard algorithms in the literature. When $\mu=0, h=0$, the RDE-CUSUM algorithm reduces to the robust CUSUM algorithm introduced in \cite{unni-etal-ieeeit-2011}. When the post-change law is precisely known, the post-change family is a singleton, i.e., $\mathcal{G} = \{\bar{g}\}$, and $\mu>0, h>0$, then the RDE-CUSUM algorithm reduces to the DE-CUSUM algorithm introduced in \cite{banerjee2013data}. Finally, when 
$\mathcal{G} = \{\bar{g}\}$ and $\mu=0, h=0$, the RDE-CUSUM algorithm reduces to the classical CUSUM algorithm. 

\section{Asymptotic Optimality of RDE-CUSUM Algorithm}
\label{sec:Optimality}
To prove the asymptotic optimality of the RDE-CUSUM algorithm, we first establish a universal lower bound on the performance of any policy. 

\begin{theorem}[Universal Lower Bound]
\label{thm:LowerBound}
    Let $\Psi$ be any policy for QCD with sampling control that satisfies the $\far(\Psi) \leq \alpha$ and $\pdc(\Psi) \leq \beta$, for any fixed $0 < \alpha < 1$ and $0 < \beta < 1$. Then, 
    \begin{equation}
        \sup_{G \in \mg} \  \wadd^{G}(\Psi) \geq \frac{|\log \alpha|}{\dd_{\textnormal{KL}}(\bar g \; \| \; f)}(1+o(1)),
    \end{equation}
    where the $o(1)$ term goes to zero as $\alpha \to 0$. 
\end{theorem}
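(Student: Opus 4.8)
The plan is to peel the statement down to a classical Lai-type bound for a single, fixed post-change law, and then show that the sampling control can be absorbed at no cost through a time-change of the likelihood ratio. Two reductions come first. Since imposing $\pdc(\Psi)\le\beta$ only shrinks the admissible set, any lower bound proved for \emph{every} $\Psi$ with $\far(\Psi)\le\alpha$ holds a fortiori under the extra constraint; the bound is therefore independent of $\beta$, as the statement reflects. Next, because an LFL $\bar g$ exists and hence $\bar g\in\mg$ (Assumption~\ref{assumption1}), we have $\sup_{G\in\mg}\wadd^{G}(\Psi)\ge\wadd^{\bar g}(\Psi)$, so it suffices to show $\wadd^{\bar g}(\Psi)\ge\frac{|\log\alpha|}{\kl(\bar g\,\|\,f)}(1+o(1))$ for every $\Psi$ with $\Expect_\infty[\tau]\ge 1/\alpha$. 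I would stress that only membership $\bar g\in\mg$ is logically needed here (the full stochastic-dominance property of the LFL is reserved for the matching achievability result); evaluating the worst case at $\bar g$ is precisely what makes the divergence $\kl(\bar g\,\|\,f)$ the relevant rate.

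For the core, fix $\epsilon\in(0,1)$ and set $m=m_\alpha=\big\lceil(1-\epsilon)|\log\alpha|/\kl(\bar g\,\|\,f)\big\rceil$. From $(\tau-k+1)^+\ge m\,\mathbbm{1}_{\{\tau\ge k+m\}}$ we get $\Expect_k^{\bar g}[(\tau-k+1)^+\mid\msi_{k-1}]\ge m\,\Prob_k^{\bar g}(\tau\ge k+m\mid\msi_{k-1})$, so it is enough to show $\Prob_k^{\bar g}(\tau< k+m\mid\msi_{k-1})\to0$ for a suitably chosen change time $k$. I would pass from $\Prob_k^{\bar g}$ to $\Prob_\infty$ via the likelihood ratio of the \emph{observed} information: because each control satisfies $M_i=\phi_i(\msi_{i-1})$ and hence has the same conditional law under both measures, the Radon--Nikodym derivative restricted to $\msi_n$ is $\prod_{i=k}^{n}\big(\bar g(X_i)/f(X_i)\big)^{M_i}=e^{\lambda_k^{n}}$, where $\lambda_k^{n}=\sum_{i=k}^{n}M_i\log\frac{\bar g(X_i)}{f(X_i)}$ is the sampled log-likelihood ratio. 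With $c=(1-\tfrac{\epsilon}{2})|\log\alpha|$, a standard measure-change splitting on $\{\max_n\lambda_k^n<c\}$ gives
\[
\Prob_k^{\bar g}(\tau< k+m\mid\msi_{k-1}) \le e^{c}\,\Prob_\infty(\tau< k+m\mid\msi_{k-1}) + \Prob_k^{\bar g}\Big(\max_{k\le n< k+m}\lambda_k^n\ge c\,\Big|\,\msi_{k-1}\Big).
\]

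It remains to kill both terms. For the first, the false-alarm constraint $\Expect_\infty[\tau]\ge1/\alpha$ forces $\Prob_\infty(\tau< k+m\mid\msi_{k-1})$ to be $o(e^{-c})$ for a suitably selected $k$ by Lai's blocking/averaging argument over windows of length $m$; this part is unaffected by sampling control since it uses only $\tau$ and $\Expect_\infty[\tau]$, so $e^{c}\Prob_\infty\to0$. For the second term I would exploit the feature that makes sampling harmless: since $M_i$ depends only on $\msi_{i-1}$ and not on $X_i$, the post-change observations that are actually collected are, in the order sampled, i.i.d. with density $\bar g$. Hence $\lambda_k^n=\sum_{j=1}^{N_k^n}\zeta_j$ is a time-changed random walk with $\zeta_j$ i.i.d. of mean $\kl(\bar g\,\|\,f)$ and $N_k^n=\sum_{i=k}^{n}M_i\le n-k+1$, so
\[
\max_{k\le n< k+m}\lambda_k^n \le \max_{0\le N\le m}\sum_{j=1}^{N}\zeta_j = m\,\kl(\bar g\,\|\,f)(1+o(1)) = (1-\epsilon)|\log\alpha|(1+o(1)) < c,
\]
eventually, by the SLLN and the finite-second-moment hypothesis, whence the second term $\to0$. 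Combining, $\Prob_k^{\bar g}(\tau<k+m\mid\msi_{k-1})\to0$, giving $\wadd^{\bar g}(\Psi)\ge m(1+o(1))$, and letting $\epsilon\downarrow0$ yields the claimed bound.

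The principal obstacle is making the sampling control harmless rigorously. The representation $\lambda_k^n=\sum_{j=1}^{N_k^n}\zeta_j$ with i.i.d. $\zeta_j$ must be justified through the strong-Markov/optional-sampling structure of the adaptively controlled process, and the domination $\max_{k\le n<k+m}\lambda_k^n\le\max_{0\le N\le m}\sum_{j\le N}\zeta_j$ is the formal content of the intuition that ``skipping samples cannot accelerate detection.'' This step is genuinely necessary: a crude bound such as $\lambda_k^n\le\sum_i\big(\log\frac{\bar g(X_i)}{f(X_i)}\big)^+$ would produce the wrong rate $\Expect_{\bar g}\big[(\log\frac{\bar g}{f})^+\big]$ instead of $\kl(\bar g\,\|\,f)$ and thus a weaker bound, so the time-change is essential. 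The remaining delicacy is the interplay between the $\sup_k$ and the $\esssup$ over $\msi_{k-1}$ in $\wadd$: the false-alarm term has to be controlled uniformly enough over the conditioning past that the blocking argument can select a single change time $k$ at which both terms vanish, and it is exactly here that the worst-case-over-past definition of $\wadd$ and the constraint $\Expect_\infty[\tau]\ge1/\alpha$ are used in tandem.
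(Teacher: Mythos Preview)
Your reduction to $\bar G$ is exactly what the paper does, and your sketch of the Lai-type argument is sound. The difference is in scope: the paper's proof is two lines---after $\sup_{G\in\mg}\wadd^G(\Psi)\ge\wadd^{\bar G}(\Psi)$, it simply cites the classical Lorden/Lai lower bound and is done. The reason this citation suffices, without any of your time-change machinery, is an embedding observation you did not exploit: since each control $M_{i+1}=\phi_{i+1}(\msi_i)$ is $\mathcal{F}_i$-measurable (with $\mathcal{F}_n=\sigma(X_1,\dots,X_n)$), one has $\sigma(\msi_n)\subseteq\mathcal{F}_n$, so any $\{\msi_n\}$-stopping time is already an $\{\mathcal{F}_n\}$-stopping time. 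Moreover, because $(\tau-k+1)^+$ is a function of $\msi_{k-1}$ and $(X_k,X_{k+1},\dots)$, and the latter is independent of $\mathcal{F}_{k-1}$, the conditional expectation $\Expect_k^{\bar G}[(\tau-k+1)^+\mid\mathcal{F}_{k-1}]$ is $\sigma(\msi_{k-1})$-measurable and hence coincides with $\Expect_k^{\bar G}[(\tau-k+1)^+\mid\msi_{k-1}]$. Thus the sampling-control $\wadd$ equals the classical Lorden $\wadd$ for the same $\tau$, and the classical bound applies verbatim.

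What your route buys is a self-contained argument that makes the mechanism (``skipping samples cannot speed up the likelihood-ratio climb'') explicit via the time-change $\lambda_k^n=\sum_{j\le N_k^n}\zeta_j$; this is correct and illuminating. What it costs is length and a residual soft spot: your treatment of the first term $e^c\Prob_\infty(\tau<k+m\mid\msi_{k-1})$ leans on ``Lai's blocking/averaging argument'' to locate a good $k$ compatible with the $\esssup$, which you yourself flag as delicate. The embedding shortcut sidesteps that delicacy entirely by handing the whole problem back to the known full-observation result.
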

\begin{proof}
    Note that
    $$
    \sup_{G \in \mg} \  \wadd^{G}(\Psi) \geq \wadd^{\bar G}(\Psi)
    $$
    When the pre-change law is $f$ and the post-change law is $\bar g$, it is well-known (\cite{lai-ieeetit-1998, lord-amstat-1971}) that 
    \begin{equation}
         \wadd^{\bar G}(\Psi) \geq \frac{|\log \alpha|}{\dd_{\textnormal{KL}}(\bar g \; \| \; f)}(1+o(1)).
    \end{equation}
\end{proof}
Note that for any fixed $\beta$, the lower bound only depends on the $\far$ constraint $\alpha$ and does not depend on $\beta$. 

To show that the proposed algorithm RDE-CUSUM achieves the lower bound, we first prove a theorem on its $\far$ and $\pdc$ performances. 

\begin{theorem}\label{thm:FARPDC}
The \textup{RDE-CUSUM} algorithm $\bar{\Psi}_{\text{RDC}}$ satisfies the following $\far$ and $\pdc$ bounds.  
    \begin{enumerate}
        \item False Alarm Rate constraint $\alpha$: For any fixed $h$ and $\mu$, setting $A=|\log \alpha|$ guarantees that
        $$
        \far(\bar{\Psi}_{\text{RDC}}) \leq \alpha. 
        $$
        Thus, the $\far$ constraint can be met irrespective of the choices for $h$ and $\mu$ to satisfy the $\pdc$ constraint $\beta$. 
        \item Pre-change duty cycle constraint $\beta$: For any choice of the threshold $A > 0$ and parameter $h>0$, there are constants $C_1$ and $C_2(h)$ (that are not a function of $A$) 
 such that setting 
        $$
        \mu \leq \frac{\beta}{1-\beta} \frac{C_2(h)}{C_1} 
        $$
        guarantees that
        $$
        \pdc(\bar{\Psi}_{\text{RDC}}) \leq \beta. 
        $$
        Here we use $C_2(h)$ to denote that the constant $C_2$ depends on $h$. 
        Thus, the $\pdc$ constraint can be met irrespective of the choice of threshold $A$ chosen to satisfy the $\far$ constraint $\alpha$.  
    \end{enumerate}
\end{theorem}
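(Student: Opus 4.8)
The plan is to prove the two assertions by separate mechanisms: the $\far$ bound is a martingale/renewal argument for the statistic $\{\bar D_n\}$ under $\Prob_\infty$, whereas the $\pdc$ bound is a renewal-reward computation on the regenerative structure of $\{\bar D_n\}$ before the change. Throughout write $Z_i=\log\frac{\bar g(X_i)}{f(X_i)}$, and note that under $\Prob_\infty$ (where $X_i\sim f$) we have the unit-mean property $\Expect_\infty[e^{Z_i}]=\int f(x)\frac{\bar g(x)}{f(x)}\,dx=1$ and the negative drift $\Expect_\infty[Z_i]=-\kl(f\,\|\,\bar g)<0$.

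For Part~1, I would first observe that a false alarm can only be raised during a sampling step: whenever a sample is skipped the update $\bar D_{n+1}=\min\{\bar D_n+\mu,0\}\le 0<A$, so the threshold $A$ is reachable only when $\bar M_{n+1}=1$. Next I would decompose the trajectory into excursions. Because each skipping phase terminates exactly at $\bar D=0$ (the cap $\min\{\cdot,0\}$ brings the statistic up to but not past zero), every sampling phase begins from the value $0$; hence the successive sampling phases are i.i.d.\ one-sided SPRTs of the walk $S_j=\sum_{i\le j}Z_i$, started at $0$ and run until it either reaches $A$ (stop) or falls below $0$ (skip). Since $e^{S_j}$ is a unit-mean $\Prob_\infty$-martingale, the stopped martingale is nonnegative and Fatou gives $\Expect_\infty[e^{S_T}]\le 1$, whence $\Prob_\infty(\text{a single excursion reaches }A)\le e^{-A}$; the floor $-h$ is irrelevant here because it only affects the landing value of an excursion that has already gone below $0$. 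By the regeneration-at-$0$ structure the number $N$ of excursions until the first one crossing $A$ is geometric with success probability $\le e^{-A}$, so $\Expect_\infty[N]\ge e^{A}$, and since $\bar\tau_{\text{rdc}}\ge N$ pointwise we get $\Expect_\infty[\bar\tau_{\text{rdc}}]\ge e^{A}$. Taking $A=|\log\alpha|$ yields $\Expect_\infty[\bar\tau_{\text{rdc}}]\ge 1/\alpha$, i.e. $\far(\bar\Psi_{\text{RDC}})\le\alpha$, independently of $h$ and $\mu$.

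For Part~2, I would exploit that $\pdc$ depends only on the pre-change law, so the relevant object is $\{\bar D_n\}$ under $\Prob_\infty$, which regenerates each time a sampling phase begins at $\bar D=0$. One cycle is a sampling phase (a sample used at every step) followed by a skipping phase (no sample used). Let $L_s$ be the number of samples in a cycle, namely the first-passage time $\min\{j\ge 1:S_j<0\}$ of the walk from $0$ to below $0$, which involves neither $A$ nor $h$; and let $U\in(0,h]$ be the undershoot magnitude, so the skipping phase has length $\lceil U/\mu\rceil\ge U/\mu$. Setting $C_1:=\Expect_\infty[L_s]$ and $C_2(h):=\Expect_\infty[U]$, both finite and independent of $A$ under Assumption~\ref{assumption1}, the renewal-reward theorem gives the long-run used fraction $\frac{\Expect_\infty[L_s]}{\Expect_\infty[L_s]+\Expect_\infty[\lceil U/\mu\rceil]}\le \frac{C_1}{C_1+C_2(h)/\mu}$. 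Requiring this to be at most $\beta$ rearranges exactly to $\mu\le\frac{\beta}{1-\beta}\frac{C_2(h)}{C_1}$, and since $C_1,C_2(h)$ do not involve $A$ the $\pdc$ constraint is decoupled from the threshold chosen for $\far$.

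The main obstacle is the conditioning on $\{\tau\ge k\}$ inside the $\limsup$ defining $\pdc$, which prevents a naive appeal to the renewal-reward theorem. I would handle it by introducing the companion statistic $\{\bar D_n^0\}$ run by the same recursion but never stopped; on $\{\tau\ge k\}$ the two processes and their sampling indicators coincide up to time $k-1$, so the conditional quantity equals $\frac1k\Expect_\infty\big[\sum_{i<k}M_i^0\mid \bar D_n^0<A,\ n<k\big]$. The process $\{\bar D_n^0\}$ is positive recurrent under $\Prob_\infty$ by the negative drift and regenerates at $0$, so its unconditional sample fraction converges to $C_1/(C_1+\Expect_\infty[\lceil U/\mu\rceil])$. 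The delicate remaining step, which I expect to be the hardest, is to show that conditioning on the rare survival event $\{\bar D_n^0<A\}$ does not raise this limiting fraction — intuitively because survival discards precisely the atypically long upward excursions — which I would establish via the regenerative decomposition together with the per-excursion crossing bound $e^{-A}$ already obtained in Part~1. Verifying the moment/uniform-integrability conditions for optional stopping and for the renewal-reward theorem is then routine given the finite second moments assumed in Assumption~\ref{assumption1}.
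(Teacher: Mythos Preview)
Your Part~1 argument is correct and is essentially the same mechanism the paper invokes: the paper reduces to the robust CUSUM via $\far(\bar\Psi_{\text{RDC}}(A,\mu,h))\le\far(\bar\Psi_{\text{RDC}}(A,0,0))$ and then appeals to the classical CUSUM analysis, which is precisely the geometric-in-excursions / unit-mean-martingale bound you wrote out directly.

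Part~2 has a genuine gap, and it is exactly the one you flagged but did not close. Your renewal--reward computation is carried out for the \emph{never-stopped} companion process, giving cycle length $L_s=\lambda_\infty$ and undershoot $U$ that are independent of $A$. But the $\pdc$ is defined with the conditioning $\{\tau\ge k\}$, which, at the cycle level, means the surviving sampling phases are SPRT runs $\lambda_A$ \emph{conditioned on exiting below $0$ rather than above $A$}. Both the conditional sampling length $\Expect_\infty[\lambda_A\mid \bar D_{\lambda_A}<0]$ and the conditional undershoot law depend on $A$, and there is no monotonicity that automatically makes the conditioned fraction $\le$ the unconditioned one; indeed the paper's own upper bound on the conditional sampling length, $\Expect_\infty[\lambda_\infty]/\Prob_\infty(\bar Z(X_1)<0)$, is \emph{larger} than your $C_1=\Expect_\infty[\lambda_\infty]$, so your heuristic ``survival discards the long upward excursions'' does not immediately give the inequality in the direction you need.

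The paper's route is different in a way that sidesteps this difficulty. It first quotes an \emph{exact} $\pdc$ formula (Theorem~5.1 of \cite{banerjee2013data}),
\[
\pdc(\bar\Psi_{\text{RDC}})=\frac{\Expect_\infty[\lambda_A\mid \bar D_{\lambda_A}<0]}{\Expect_\infty[\lambda_A\mid \bar D_{\lambda_A}<0]+\Expect_\infty[T(\bar D_{\lambda_A})\mid \bar D_{\lambda_A}<0]},
\]
so the conditioning on survival is already absorbed into the conditioning $\{\bar D_{\lambda_A}<0\}$. It then bounds the numerator above and the skipping term below by $A$-free quantities via two dedicated lemmas (Lemmas~1 and~2 of \cite{banerjee2013data}), yielding $C_1=\Expect_\infty[\lambda_\infty]$ and $C_2(h)=\Expect_\infty\big[|\bar Z(X_1)^{h+}|\,\big|\,\bar Z(X_1)<0\big]\,\Prob_\infty(\bar Z(X_1)<0)^2$. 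Note these constants differ from yours (in particular $C_2(h)$ is built from the \emph{single-step} undershoot, not the ladder undershoot), precisely because the authors need bounds that hold uniformly over the $A$-dependent conditioning. To complete your approach you would need either to prove the exact $\pdc$ identity yourself and then supply analogous $A$-free bounds on the conditioned cycle quantities, or to give a rigorous coupling/FKG-type argument showing the survival conditioning cannot increase the long-run sampling fraction; the sketch you gave does not do either.
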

\begin{proof}
See Appendix \ref{sec:AppendixFAR}.
\end{proof}
We remark that the $\far$ bound is universal and does not depend on the LFL $\bar g$. However, the choice of $\mu$ to satisfy the $\pdc$ constraint depends on $\bar g$ through the constants $C_1$ and $C_2$ appearing in the theorem. 

When $A \to \infty$ and $h \to \infty$, the choice of $\mu$ is considerably simplified. 
\begin{corollary}[Simple Choice of $\mu$]
\label{corr:simplePDC}
   If $A \to \infty$ and $h \to \infty$, then by selecting $\mu$ such that 
    \begin{align}\label{eq:mu choice}
    \mu \leq \frac{\beta}{1 - \beta} \dd_{\textnormal{KL}}(f \; \| \; \bar g),
    \end{align}
    the $\pdc$ constraint of $\beta$ can be satisfied in that asymptotic regime, or asymptotically satisfied.  
\end{corollary}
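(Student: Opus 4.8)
The plan is to evaluate the ratio $C_2(h)/C_1$ appearing in Theorem~\ref{thm:FARPDC} in the double limit $A\to\infty$, $h\to\infty$ and show it converges to $\dd_{\textnormal{KL}}(f \; \| \; \bar g)$; substituting this into the sufficient condition $\mu \le \frac{\beta}{1-\beta}\frac{C_2(h)}{C_1}$ then gives the claim. Rather than invoke the (possibly conservative) constants of the theorem directly, I would re-derive the $\pdc$ through a renewal--reward analysis of the pre-change dynamics, which makes the limiting constants explicit and identifies them with natural random-walk quantities.

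First I would describe the regenerative structure of $\{\bar D_n\}$ under $\Prob_{\infty}$. Starting from $\bar D_0 = 0$, as long as $\bar D_n \ge 0$ every sample is used ($\bar M_{n+1}=1$) and the statistic coincides with the free random walk $S_n = \sum_{i\le n} Y_i$, $Y_i = \log\frac{\bar g(X_i)}{f(X_i)}$, with no reflection at $0$. The sampling phase ends at the first passage $T_0 = \inf\{n: S_n < 0\}$, at which point $\bar D_{T_0} = -U_h$, where $U_h = \min\{U,h\}$ is the undershoot $U = -S_{T_0}$ truncated at $h$. The algorithm then skips $\lceil U_h/\mu\rceil$ observations (adding $\mu$ each step, capped at $0$) and returns to $\bar D = 0$, completing a cycle. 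Thus each regeneration cycle has a sampling phase of length $T_0$ contributing $T_0$ used observations, followed by a skip phase of length $\lceil U_h/\mu\rceil$ contributing none. As $A\to\infty$ the probability of crossing $A$ within a cycle vanishes, so the stopping boundary and the conditioning on $\{\tau\ge k\}$ have asymptotically negligible effect, and the renewal--reward theorem gives
\begin{equation*}
\pdc(\bar{\Psi}_{\text{RDC}}) = \frac{\Expect_{\infty}[T_0]}{\Expect_{\infty}[T_0] + \Expect_{\infty}[\lceil U_h/\mu\rceil]}.
\end{equation*}
Using $\lceil x\rceil \ge x$ to bound the expected number of skips below by $\Expect_{\infty}[U_h]/\mu$ enlarges the denominator, so this ratio is at most $\beta$ whenever $\mu \le \frac{\beta}{1-\beta}\frac{\Expect_{\infty}[U_h]}{\Expect_{\infty}[T_0]}$; this identifies $C_1 = \Expect_{\infty}[T_0]$ and $C_2(h) = \Expect_{\infty}[U_h]$.

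The core computation is then to evaluate $\Expect_{\infty}[U_h]/\Expect_{\infty}[T_0]$ in the limit. Under $\Prob_{\infty}$ the increments have strictly negative drift $\Expect_{\infty}[Y] = -\dd_{\textnormal{KL}}(f \; \| \; \bar g) < 0$, so $T_0$ is a.s. finite and, by standard negative-drift random-walk theory together with integrability of $Y$ (Assumption~\ref{assumption1}), $\Expect_{\infty}[T_0] < \infty$. Wald's identity then yields $\Expect_{\infty}[S_{T_0}] = \Expect_{\infty}[Y]\,\Expect_{\infty}[T_0]$, i.e. $-\Expect_{\infty}[U] = -\dd_{\textnormal{KL}}(f \; \| \; \bar g)\,\Expect_{\infty}[T_0]$, so that $\Expect_{\infty}[U]/\Expect_{\infty}[T_0] = \dd_{\textnormal{KL}}(f \; \| \; \bar g)$ (and in particular $\Expect_{\infty}[U] < \infty$). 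Finally, since $U_h \uparrow U$ as $h\to\infty$, monotone convergence gives $\Expect_{\infty}[U_h] \to \Expect_{\infty}[U]$, hence $C_2(h)/C_1 \to \dd_{\textnormal{KL}}(f \; \| \; \bar g)$, which is exactly the stated choice of $\mu$.

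The main obstacle is the rigorous justification of the two limits, not the algebra. For Wald I must establish $\Expect_{\infty}[T_0] < \infty$ for the strictly negative-drift walk; this is classical under finite (negative) mean, and the second-moment condition of Assumption~\ref{assumption1} comfortably secures it as well as the finiteness of $\Expect_{\infty}[U]$ needed for the monotone-convergence step. The more delicate point is the $A\to\infty$ step: one must show that replacing the true stopped-and-conditioned sample fraction $\frac{1}{k}\Expect_{\infty}[\sum_{i<k} M_i \mid \tau\ge k]$ by the unconditioned renewal--reward ratio incurs only $o(1)$ error, i.e. that the expected per-cycle overshoot of $A$ and the correction from conditioning on $\{\tau\ge k\}$ vanish as $A\to\infty$ (with $k\to\infty$ in the $\limsup$). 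This is precisely the regime in which the theorem's constants $C_1, C_2(h)$ become tight, and it is where the bulk of the technical care is required.
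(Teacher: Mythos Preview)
Your proposal is correct and follows essentially the same route as the paper. The paper starts from the exact renewal--reward expression for the $\pdc$ (quoted from Theorem~5.1 of \cite{banerjee2013data}), sends $A\to\infty$ and $h\to\infty$ to replace $\lambda_A$ by $\lambda_\infty$ and remove the truncation, applies $\lceil x\rceil\ge x$, and finishes with Wald's lemma $\Expect_\infty[|\bar D_{\lambda_\infty}|]=\Expect_\infty[\lambda_\infty]\,\dd_{\textnormal{KL}}(f\,\|\,\bar g)$ --- exactly your $T_0$, $U$, and Wald step; your constants $C_1,C_2(h)$ are tighter than those recorded in Theorem~\ref{thm:FARPDC} (which come from one-step bounds and do not collapse to $\dd_{\textnormal{KL}}$ as $h\to\infty$), and that is precisely why both you and the paper bypass them and work directly with the exact cycle formula for the corollary.
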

\begin{proof}
    See Appendix \ref{sec:AppendixFAR}.
\end{proof}

\medskip
Next, we provide a delay analysis of the RDE-CUSUM algorithm and obtain an upper bound on $\sup_{G \in \mg} \  \wadd^{G}(\bar{\Psi}_{\text{RDC}})$ for any choice of threshold $A$. 
\begin{theorem}
\label{thm:DelayAnalysis}
    If Assumption~\ref{assumption1} is satisfied, then for any fixed choices for $A, \mu, h$, the delay of the \textup{RDE-CUSUM} algorithm satisfies
    \begin{equation}
        \sup_{G \in \mg} \  \wadd^{G}(\bar{\Psi}_{\text{RDC}}) \leq \frac{A}{\dd_{\textnormal{KL}}(\bar g \; \| \; f)}(1+o(1)),
    \end{equation}
  where the $o(1)$ term goes to zero as $A \to \infty$.
\end{theorem}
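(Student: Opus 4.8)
The plan is to prove the bound through three reductions: pass from the supremum over $\mg$ to the least favorable law $\bar G$, pass from the essential supremum over the past $\msi_{k-1}$ to the single worst initial statistic value $\bar D_{k-1}=-h$, and then compute the conditional delay from that initial value by treating $\{\bar D_n\}$ after the change as a random walk with positive drift that must cross the level $A$.

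First I would handle the reduction over $\mg$. Since $\{\bar D_n\}$ is driven entirely by the increments $\log\frac{\bar g(X)}{f(X)}$, and Definition~\ref{assumption LFL} says these increments are stochastically smallest under $\bar G$, the post-change drift obeys $\Expect^{G}\!\left[\log\frac{\bar g(X)}{f(X)}\right] \geq \Expect^{\bar G}\!\left[\log\frac{\bar g(X)}{f(X)}\right] = \dd_{\textnormal{KL}}(\bar g \; \| \; f)$ for every $G \in \mg$, because stochastic dominance orders the means. I would use this drift inequality directly, rather than a pathwise coupling of the statistics (which the on--off sampling rule can break, since a negative $\bar D_n$ skips an observation that a positive $\bar D_n$ would have taken), so that the delay bound obtained for each fixed $G$ carries $\dd_{\textnormal{KL}}(\bar g \; \| \; f)$ in the denominator and survives the supremum over $\mg$.

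Next I would fix $G$ and bound the conditional delay. Because the conditional delay is decreasing in the starting value $\bar D_{k-1}\in[-h,A)$, the essential supremum is attained at $\bar D_{k-1}=-h$. From $-h$ the statistic performs a purely deterministic recovery: while negative it is incremented by $\mu$ and capped at $0$, so it reaches $0$ in exactly $\lceil h/\mu\rceil$ skipped steps, a number independent of $A$. Once at $0$, the sampling control turns observations on and $\{\bar D_n\}$ evolves as the truncated CUSUM walk with drift at least $\dd_{\textnormal{KL}}(\bar g \; \| \; f)$. Applying Wald's identity at the crossing of $A$, the expected number of observations for a sustained climb from $0$ to $A$ is $(A+\Expect[\text{overshoot}])/\text{drift}$, and the finite-moment part of Assumption~\ref{assumption1} bounds the mean overshoot by an $O(1)$ term; this yields $\frac{A}{\dd_{\textnormal{KL}}(\bar g \; \| \; f)}(1+o(1))$ observations. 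Absorbing the $O(1)$ recovery and overshoot contributions into the $o(1)$ factor gives the claimed bound, simultaneously for all $G\in\mg$ by the drift inequality.

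The hard part will be accounting for the time steps lost to sampling during the detection phase, that is, the excursions of $\{\bar D_n\}$ below zero that occur after the change while it climbs toward $A$; the total delay exceeds the observation count by (number of such excursions)$\,\times O(1)$, since each excursion is truncated at $-h$ and repaired in at most $\lceil h/\mu\rceil$ skipped steps. The key estimate is that these excursions form i.i.d. renewal cycles restarting from $0$, and because the walk has strictly positive drift a single cycle reaches $A$ before returning below $0$ with probability bounded away from zero uniformly in $A$; hence the expected number of failed excursions before crossing $A$ is $O(1)$, making the lost steps $o(A)$. Establishing this renewal estimate together with the overshoot bound, and checking that the resulting $o(1)$ is uniform over $\mg$ (which I would again route through the stochastic-dominance control of Definition~\ref{assumption LFL}, under which such excursions are only rarer for $G\neq\bar G$), is the main obstacle; this is exactly the DE-CUSUM delay analysis of \cite{banerjee2013data} specialized to the increments of the LFL, which I would adapt rather than reprove from scratch.
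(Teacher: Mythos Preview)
Your proposal is essentially correct and tracks the paper's proof closely: reduce the worst past to $\bar D_{k-1}=-h$ at a cost of $\lceil h/\mu\rceil$, bound the climb from $0$ to $A$ by the robust-CUSUM delay, and control the post-change excursions below zero as geometrically many renewal cycles of length at most $\lceil h/\mu\rceil$, with uniformity over $\mg$ supplied by the LFL.

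One refinement is worth flagging. You deliberately avoid pathwise coupling because the on--off rule can desynchronize two sample paths, proposing to work with the drift inequality alone. The paper \emph{does} use monotone coupling, but only in places where the on--off rule is inactive: (i) the robust-CUSUM stopping time $\tau_{rc}$ has no sampling control, so $\sup_{G}\Expect_1^{G}[\tau_{rc}]=\Expect_1^{\bar G}[\tau_{rc}]$ follows directly from the pathwise result of \cite{unni-etal-ieeeit-2011}; and (ii) the excursion-success probability $\Prob_1^{G}(\bar D_{\lambda_A}\geq A)$ is first lower-bounded by $\Prob_1^{G}(\bar D_n\geq 0\text{ for all }n)$, an event on which no sample is ever skipped, so stochastic dominance of the increments transfers pathwise and yields a uniform lower bound $q>0$ depending only on $\bar G$. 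Your drift-only route would still need a uniform-in-$G$ control of the overshoot, which first-order stochastic dominance does not supply by itself; routing through (i)--(ii) sidesteps this and is exactly the adaptation of \cite{banerjee2013data} you anticipate.
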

\begin{proof}
    See Appendix~\ref{appendix 1}.
\end{proof}

Based on the results above, we can now state and prove the result on the asymptotic robust optimality of the RDE-CUSUM algorithm. 

\begin{theorem}[Asymptotic robust optimality of RDE-CUSUM]
\label{thm:AsymptoticOptimality}
If Assumption~\ref{assumption1} is satisfied, then the \textup{RDE-CUSUM} algorithm is asymptotically robust optimal for each fixed $\beta$, as $\alpha \to 0$.  
\end{theorem}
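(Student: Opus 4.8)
The plan is to prove asymptotic robust optimality by a sandwiching argument that pins the worst-case delay of RDE-CUSUM between the universal lower bound of Theorem~\ref{thm:LowerBound} and the delay upper bound of Theorem~\ref{thm:DelayAnalysis}, and then to show the two bounds coincide to first order as $\alpha \to 0$. The three preceding theorems have been arranged precisely so that this final step reduces to verifying feasibility and matching leading terms.

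First I would fix the duty-cycle constraint $\beta \in (0,1)$ and any $h > 0$, and invoke Theorem~\ref{thm:FARPDC}(2) to select $\mu = \mu(\beta,h)$ satisfying $\mu \leq \frac{\beta}{1-\beta}\frac{C_2(h)}{C_1}$, so that $\pdc(\bar{\Psi}_{\text{RDC}}) \leq \beta$. The crucial feature I would stress is that this choice of $\mu$ does not depend on the threshold $A$; the duty-cycle constraint is thereby discharged once and for all, independently of $\alpha$. (In the doubly-asymptotic regime one may instead appeal to Corollary~\ref{corr:simplePDC} for the simpler prescription $\mu \leq \frac{\beta}{1-\beta}\dd_{\textnormal{KL}}(f \; \| \; \bar g)$, but the fixed-$h$ version suffices.) Next, for each target false-alarm rate $\alpha$, I would set $A = |\log\alpha|$, which by Theorem~\ref{thm:FARPDC}(1) guarantees $\far(\bar{\Psi}_{\text{RDC}}) \leq \alpha$ regardless of the already-fixed $h$ and $\mu$. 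Together these establish that $\bar{\Psi}_{\text{RDC}}(|\log\alpha|,\mu,h)$ is feasible for the robust problem~\eqref{problem 2 G beta} for every $\alpha$.

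With feasibility in hand, I would substitute $A = |\log\alpha|$ into Theorem~\ref{thm:DelayAnalysis}. Because $A = |\log\alpha| \to \infty$ exactly when $\alpha \to 0$, the $o(1)$ term there becomes $o(1)$ in the limit $\alpha \to 0$, yielding
\begin{equation*}
    \sup_{G \in \mg} \  \wadd^{G}(\bar{\Psi}_{\text{RDC}}) \leq \frac{|\log\alpha|}{\dd_{\textnormal{KL}}(\bar g \; \| \; f)}(1 + o(1)).
\end{equation*}
Since $\bar{\Psi}_{\text{RDC}}$ is feasible, the optimal value of~\eqref{problem 2 G beta} is no larger than this quantity; and by Theorem~\ref{thm:LowerBound} every feasible policy, hence the optimal value itself, is at least $\frac{|\log\alpha|}{\dd_{\textnormal{KL}}(\bar g \; \| \; f)}(1+o(1))$ in the same limit. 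The upper and lower bounds therefore agree to leading order as $\alpha \to 0$, so RDE-CUSUM attains the optimal worst-case delay to first order and is asymptotically robust optimal for each fixed $\beta$.

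The substantive content of the argument is not this closing sandwich but the decoupling of the two constraints already supplied by Theorem~\ref{thm:FARPDC}: the fact that $\mu$ may be frozen to meet the $\pdc$ bound while $A$ is simultaneously sent to infinity to meet the vanishing $\far$ bound, the two choices never interfering. I expect the only point requiring care to be the compatibility of the two $o(1)$ terms, namely ensuring both are taken in the single limit $\alpha \to 0$ under the identification $A = |\log\alpha|$. It is worth remarking, finally, that the matching leading constant is $\dd_{\textnormal{KL}}(\bar g \; \| \; f)$, governed by the LFL alone, so that robustness to the unknown post-change law costs nothing asymptotically.
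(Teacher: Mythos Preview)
Your proposal is correct and follows essentially the same approach as the paper's own proof: fix $h$ and choose $\mu$ via Theorem~\ref{thm:FARPDC}(2) to meet the $\pdc$ constraint independently of $A$, set $A=|\log\alpha|$ via Theorem~\ref{thm:FARPDC}(1) to meet the $\far$ constraint, then sandwich the worst-case delay between the lower bound of Theorem~\ref{thm:LowerBound} and the upper bound of Theorem~\ref{thm:DelayAnalysis}. Your write-up is somewhat more explicit about the decoupling of the two constraints and the compatibility of the $o(1)$ terms, but the argument is the same.
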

\begin{proof}
By Theorem~\ref{thm:FARPDC}, the $\far$ constraint $\alpha$ is satisfied if we set the threshold $A= | \log \alpha |$. Fix any $0 < h < \infty$. By setting $
        \mu \leq \frac{\beta}{1-\beta} \frac{C_2(h)}{C_1} 
        $, the $\pdc$ constraint is satisfied. Thus, both $\far$ and $\pdc$ constraint can be met and for $A= | \log \alpha |$, Theorem~\ref{thm:DelayAnalysis} shows that
        \begin{equation}
        \sup_{G \in \mg} \  \wadd^{G}(\bar{\Psi}_{\text{RDC}}) \leq \frac{A}{\dd_{\textnormal{KL}}(\bar g \; \| \; f)}(1+o(1)) = \frac{| \log \alpha |}{\dd_{\textnormal{KL}}(\bar g \; \| \; f)}(1+o(1)).
    \end{equation}
    By Theorem~\ref{thm:LowerBound}, the obtained upper bound is also the lower bound. This proves the asymptotic robust optimality of the RDE-CUSUM algorithm. 
\end{proof}



\section{Examples of Least Favorable Law}
\label{sec:Examples}
In this section, we provide examples of LFL from Gaussian and Poisson families. For more similar and more general examples, we refer the readers to \cite{unni-etal-ieeeit-2011, bane-hou-multinonstationary, hou2024robust}. 


\begin{example}[Gaussian LFL]
\label{exam:GaussLFD}
    Let the pre-change density be given by $f = \mathcal{N}(0,1)$ and the post-change densities be given by
    $$
    g = \mathcal{N}(\mu, 1), \quad \mu \geq \bar\mu,
    $$
    where the mean is not known but is believed to be lower bounded by some known $\bar{\mu}$. We show that $\bar g = \mathcal{N}(\bar \mu, 1)$ is LFL.    
    Since
    \begin{align*}
        \log\frac{\bar{g}(X)}{f(X)} = \bar{\mu} X - \frac{\bar{\mu}^2}{2},
    \end{align*}
     we have 
     \begin{align*}
        X \sim  \mathcal{N}(\bar{\mu}, 1) &\implies \log \frac{\bar{g}_(X)}{f(X)}  \sim \mathcal{N} \left( \frac{\bar{\mu}^2}{2}, \ \bar{\mu}^2\right), \\
         X \sim  \mathcal{N}({\mu}, 1) &\implies \log \frac{\bar{g}(X)}{f(X)}  \sim \mathcal{N} \left(\bar{\mu}\cdot{\mu} - \frac{\bar{\mu}^2}{2}, \; \bar{\mu}^2\right).
     \end{align*}
      Since $\mu \geq \bar{\mu}$, we have
      $
      \bar{\mu}\cdot{\mu} - \frac{\bar{\mu}^2}{2} \geq \frac{\bar{\mu}^2}{2}.
      $
      Thus, $\mathcal{N}\left(\bar{\mu}\cdot{\mu} - \frac{\bar{\mu}^2}{2}, \; \bar{\mu}^2\right)$ dominates $\mathcal{N}\left( \frac{\bar{\mu}^2}{2}, \; \bar{\mu}^2\right)$ in stochastic order \cite{bane-hou-multinonstationary, hou2024robust}.  
      
\end{example}

\medskip
\begin{example}[Poisson LFL]
\label{exam:PoissonLFD}
    Let the pre-change density be given by
    $
    f = \text{Pois}(\lambda_0)
    $
    and the post-change densities be given by
    $$
    g = \text{Pois}(\lambda_1), \quad \lambda_1 \geq \bar\lambda_1,
    $$
    where the mean is not known but is believed to be lower bounded by some known $\bar{\lambda}$. 
    We show that $\bar g = \text{Pois}(\bar \lambda_1)$ is LFL. 
    Since
    \begin{align*}
        \log \frac{\bar{g}(X)}{f(X)} = \log \left[\left(\frac{\bar{\lambda}_{1}}{\lambda_0}\right)^X e^{-\bar{\lambda}_{1} + \lambda_0}\right] = X \log \left(\frac{\bar{\lambda}_{1}}{\lambda_0}\right) -\bar{\lambda}_{1} + \lambda_0,
    \end{align*}
     we have 
     \begin{align*}
        X \sim  \text{Pois}(\bar{\lambda}_{1}) &\implies \log \frac{\bar{g}(X)}{f(X)}  \sim \text{Pois} \left(\bar{\lambda}_{1} \log \left(\frac{\bar{\lambda}_{1}}{\lambda_0}\right) -\bar{\lambda}_{1} + \lambda_0\right), \\
         X \sim  \text{Pois}({\lambda}_{1}) &\implies \log \frac{\bar{g}(X)}{f(X)}  \sim \text{Pois} \left({\lambda}_{1} \log \left(\frac{\bar{\lambda}_{1}}{\lambda_0}\right) -\bar{\lambda}_{1} + \lambda_0\right) .
     \end{align*}
    Since $\lambda_{1} \geq \bar{\lambda}_{1}$, we have
    $$
    {\lambda}_{1} \log \left(\frac{\bar{\lambda}_{1}}{\lambda_0}\right) -\bar{\lambda}_{1} + \lambda_0 \geq \bar{\lambda}_{1} \log \left(\frac{\bar{\lambda}_{1}}{\lambda_0}\right) -\bar{\lambda}_{1} + \lambda_0.
    $$
    Thus, $\text{Pois} \left({\lambda}_{1} \log \left(\frac{\bar{\lambda}_{1}}{\lambda_0}\right) -\bar{\lambda}_{1} + \lambda_0\right)$ stochastically dominates $\text{Pois} \left(\bar{\lambda}_{1} \log \left(\frac{\bar{\lambda}_{1}}{\lambda_0}\right) -\bar{\lambda}_{1} + \lambda_0\right)$ \cite{bane-hou-multinonstationary, hou2024robust}. 
    
      
\end{example}

\section{Numerical Studies}\label{sec: numerical studies}
\label{sec:Numerical}
We demonstrate the effectiveness of the RDE-CUSUM algorithm on both simulated and real data. 
First, we compare the detection time of robust algorithms across different values of $\beta$ in the $\pdc$ constraint, using simulated continuous and discrete data processes, Gaussian and Poisson. Next, we validate its effectiveness through extensive simulations. We also apply the RDE-CUSUM algorithm to detect the onset of a pandemic. The results indicate that the RDE-CUSUM algorithm, which adaptively skips observations, performs nearly as effectively as the robust CUSUM test, which does not skip observations.

\subsection{RDE-CUSUM Algorithm on Simulated Data with Unknown Density}
We first consider a Gaussian example: 
\begin{equation}\label{eq: real_data_normal} 
f = \mathcal{N}(0,1), \quad g = \mathcal{N}(\theta, 1), \quad \theta \geq 0.5. 
\end{equation} 
As discussed in Section~\ref{sec:Examples}, the LFL is 
$\bar{g} = \mathcal{N}(0.5, 1).$
In Figure \ref{fig:comparison 1} (right), we compare the detection delays ($\wadd$) of three schemes:
\begin{enumerate}
    \item the RDE-CUSUM algorithm, 
    \item the robust CUSUM algorithm (the RDE-CUSUM algorithm with no observation control), and 
    \item the fractional sampling scheme, where the robust CUSUM statistic is updated based on a fair coin toss: if a head is obtained, the new observation is incorporated to update the robust CUSUM statistic; otherwise, the previous statistic value is retained. The first observation is always included.
\end{enumerate}
For the RDE-CUSUM algorithm \eqref{tau M_n}, $h = 10$ and $\mu$ is chosen according to \eqref{eq:mu choice}.
\begin{figure}[!h]
    \centering
    \includegraphics[scale=0.6]{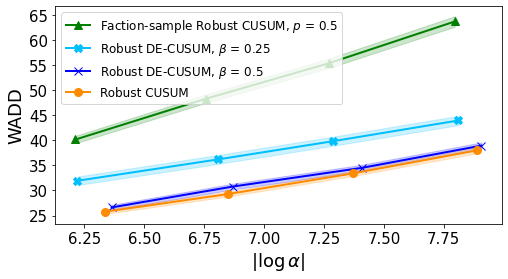}
    \caption{Comparison of RDE-CUSUM algorithm, robust CUSUM algorithm, and the fractional sampling scheme-based robust CUSUM algorithm for Gaussian data processes \eqref{eq: real_data_normal} with $\theta = 1$.}
    \label{fig:comparison 1}
\end{figure}
We compare the $\wadd$ across different false alarm rates. Each point represents an empirical average from 5000 simulations, with carefully chosen thresholds to ensure comparable false alarm rates. The shaded area represents the $95\%$ confidence interval. When $\beta = 0.5$, the detection delay of the RDE-CUSUM algorithm closely matches that of the robust CUSUM, the robust optimal solution with known distribution that uses all the observations. For smaller $\beta$ values, $\wadd$ increases, but it remains lower than that of the fraction sample algorithm.
The plot shows that one can skip $50\%$ of the observations used in the detection process without significantly affecting the detection delay and without knowing the post-change distribution.

In Figure \ref{fig:comparison 2}, we report a similar result for Poisson data: 
\begin{equation}\label{eq: real_data_poisson}
    \begin{split}
        f = \text{Pois}(0.5), \quad  g = \text{Pois}(\lambda), \quad \lambda \geq 1,
    \end{split}
\end{equation}
for which $\bar g = \text{Pois}(1)$. 
\begin{figure}[!h]
    \centering
    \includegraphics[scale=0.6]{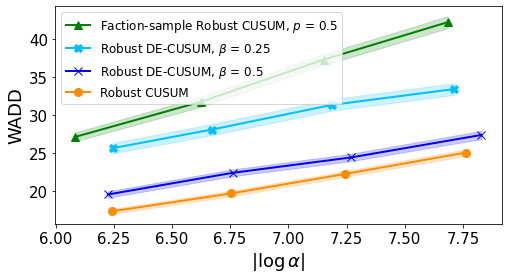}
    \caption{Comparison of RDE-CUSUM algorithm, robust CUSUM algorithm, and the fractional sampling scheme-based robust CUSUM algorithm for Gaussian data processes \eqref{eq: real_data_poisson} with $\lambda = 1.5$.}
    \label{fig:comparison 2}
\end{figure}



\subsection{Robust DE-CUSUM Test on Detecting Pandemic Onset}
\label{sec: DE-CUSUM on covid}
In this section, we apply the RDE-CUSUM algorithm, the robust CUSUM, and the fractional sampling scheme-based robust CUSUM algorithm to detect the onset of a pandemic using publicly available U.S. COVID-19 infection data. 

In Figure~\ref{fig:comparison 3} (Left) and Figure~\ref{fig:comparison 4} (Left), we selected two U.S. counties with similar population sizes, Allegheny in PA and St. Louis in MO. The daily case counts for the first 200 days (starting from 2020/1/22) are shown, with $\text{Pois}(1)$ noise added to the data. 
This data generation process emulates scenarios requiring the detection of a pandemic onset amidst daily infections caused by other viruses or the emergence of a new variant. Therefore, the data represents the detection of deviations from an established baseline.

\begin{figure}[!h]
    \centering
    \includegraphics[width=0.475\linewidth]{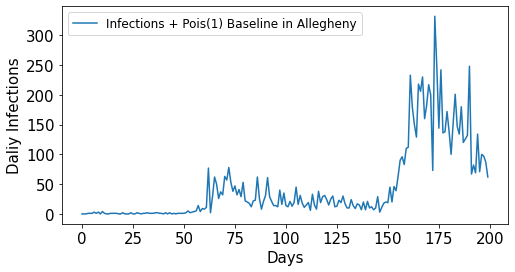}
    \includegraphics[width=0.47\linewidth]{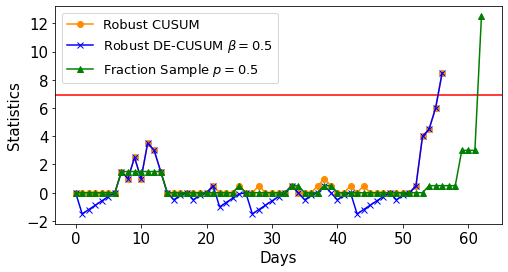}
    \caption{Robust test for detecting COVID-19 outbreak in Allegheny County. [Left] Daily increase of confirmed infection cases with $\text{Pois}(1)$ noise added.  [Right] To detect the pandemic outbreak, robust CUSUM test is designed with pre-change density $f = \text{Pois}(1)$ and post-LFL $\bar g = \text{Pois}(2)$. The threshold is chosen to be $6.9 = \log(1000)$ to guarantee an expected time to false alarm greater than $1000$ days.}
    \label{fig:comparison 3}
\end{figure}

\begin{figure}[!h]
    \centering
    \includegraphics[width=0.475\linewidth]{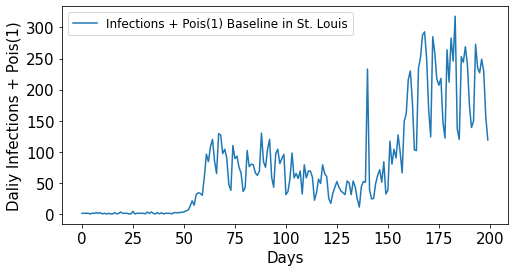}
    \includegraphics[width=0.47\linewidth]{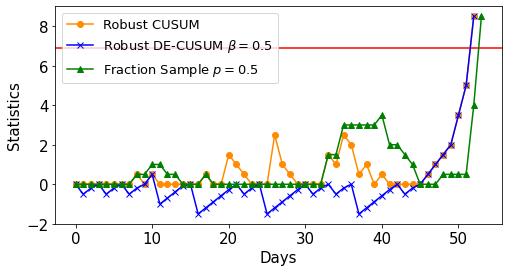}
    \caption{Robust test for detecting COVID-19 outbreak in St. Louis County. [Left] Daily increase of confirmed infection cases with $\text{Pois}(1)$ noise added.  [Right] To detect the pandemic outbreak, robust CUSUM test is designed with pre-change density $f = \text{Pois}(1)$ and post-LFL $\bar g = \text{Pois}(2)$. The threshold is chosen to be $6.9 = \log(1000)$ to guarantee an expected time to false alarm greater than $1000$ days.}
    \label{fig:comparison 4}
\end{figure}

In Figure~\ref{fig:comparison 3} (Right) and Figure~\ref{fig:comparison 4} (Right), since the actual number of infections is unknown, we take a robust approach, designing the CUSUM test with a post-change LFL given by $\text{Pois}(2)$. The threshold is set to $6.9 = \log(1000)$ to ensure an expected time to false alarm exceeding 1000. 
Both the robust CUSUM and robust DE-CUSUM algorithm (with $h = 10$) detect the change promptly (within a week) after daily infections start to rise significantly (a few days past the 50th day in both counties). Given its lower observation cost, the data-efficient algorithm may be preferable for such public health monitoring.
The fractional sampling scheme also detects the outbreak and is notably responsive in St. Louis County. However, it does not consistently match the timely detection of the RDE-CUSUM algorithm, aligning with the results in Figure \ref{fig:comparison 2} (right).

\section{Conclusion}
We proposed the RDE-CUSUM algorithm that uses the likelihood ratio of the data to detect the change and execute an on-off observation control. Since the post-change law is not precisely known, the likelihood ratios are computed using the LFL. A change is declared when the accumulated likelihood ratios cross a threshold. When the accumulated likelihood ratio goes below zero, consecutive samples are skipped based on the undershoot and using a design parameter. After the skipped samples, the accumulation of the likelihood ratios starts again by ignoring the past observations. We showed that the RDE-CUSUM algorithm can be designed to meet the constraints on the false alarm rate and the pre-change duty cycle independently. The algorithm is also asymptotically robust optimal for every fixed constraint on the pre-change duty cycle as the rate of false alarms goes to zero. We showed through simulations that the RDE-CUSUM algorithm can skip a fixed but large fraction of samples without significantly affecting the detection delay and performs comparatively with the robust CUSUM algorithm that uses all the observations. Since the RDE-CUSUM algorithm adaptively skips samples, it performs better than a fractional sampling scheme where observations are skipped based on a coin toss. Finally, we applied the algorithm to detect the onset of the COVID-19 pandemic without using a significant number of samples before the onset.


\appendix
\section{Proof of Theorem~\ref{thm:FARPDC} and Corollary~\ref{corr:simplePDC}}
\label{sec:AppendixFAR}
It follows from the arguments given in the proof of Lemma 4 in \cite{banerjee2013data} that
\begin{equation}
\label{eq:FARPDFproof_1}
    \far(\bar{\Psi}_{\text{RDC}}(A, \mu, h)) \leq \far(\bar{\Psi}_{\text{RDC}}(A, 0, 0)),
\end{equation}
where $\bar{\Psi}_{\text{RDC}}(A, 0, 0)$ is the RDE-CUSUM algorithm with parameters $\mu=h=0$. As discussed earlier, this corresponds to the robust CUSUM algorithm from \cite{unni-etal-ieeeit-2011}. Since robust CUSUM algorithm is the classical CUSUM algorithm with pre-change $f$ and post-change $\bar g$, it follows from the classical analysis of the CUSUM algorithm (\cite{lord-amstat-1971}) that $A=|\log \alpha |$ guarantees that
$$
\far(\bar{\Psi}_{\text{RDC}}(A, 0, 0)) \leq \alpha. 
$$
We refer the readers to \cite{banerjee2013data} to understand why \eqref{eq:FARPDFproof_1} is true. 

To prove the result on the PDC, we need some definitions. We first 
define the stopping time for a sequential probability ratio test using densities $f$ and $\bar g$:
\begin{align*}
    \lambda_A \coloneqq \inf\left\{n \geq 1 : \sum_{i=1}^n \log\frac{\bar g(X_i)}{f(X_i)} \notin [0, A)\right\}.
\end{align*}
It follows from the definitions that 
$$
\bar D_n = \sum_{i=1}^n \log\frac{\bar g(X_i)}{f(X_i)}, \text{ for } n < \lambda_A. 
$$
Thus,
\begin{align*}
    \lambda_A = \inf\left\{n \geq 1 : \bar D_n \notin [0, A)\right\}.
\end{align*}
When $A=\infty$, we get
\begin{align*}
    \lambda_\infty \coloneqq \inf\left\{n \geq 1 : \sum_{i=1}^n \log\frac{\bar g(X_i)}{f(X_i)} < 0\right\}.
\end{align*}
Since $0 < \dd_{\textnormal{KL}}(f \; \| \; \bar g) < \infty$, it follows from \cite{woodroofe1982nonlinear} (see Corollary 2.4) that 
\begin{align*}
    \Expect_{\infty}[\lambda_{\infty}] < \infty,
\end{align*}
Next, to capture the sojourn time below 0, we define for $x < 0$,
\begin{align*}
    T(x) \coloneqq \left\lceil  \frac{\big|(x)^{h+}\big|}{\mu} \right\rceil.
\end{align*}

 It follows from Theorem 5.1 in \cite{banerjee2013data} that for fixed values of $A$, $h$, and $\mu > 0$, if $0 < \dd_{\textnormal{KL}}(f \; \| \; \bar g) < \infty$, then
\begin{align}
\label{eq:FARPDFproof_2}
\pdc(\bar{\Psi}_{\text{RDC}}(A, \mu, h)) = 
\frac{\Expect_{\infty}[\lambda_A | \bar D_{\lambda_A} < 0]}{\Expect_{\infty}[\lambda_A | \bar D_{\lambda_A} < 0] + \Expect_{\infty}[T(\bar D_{\lambda_A}) | \bar D_{\lambda_A} < 0]}.
\end{align}
Define 
$$
\bar Z(X) = \log(\bar g(X)/f(X)). 
$$
Then, it follows from Lemma 1 in \cite{banerjee2013data} that 
\begin{equation}
    \label{eq:FARPDFproof_3}
        \Expect_{\infty}[\lambda_A | \bar D_{\lambda_A} < 0] \leq \frac{\Expect_\infty[\lambda_\infty]}{\Prob_\infty(\bar Z (X_1) < 0)}. 
\end{equation}
Also, it follows from Lemma 2 in \cite{banerjee2013data} that 
\begin{equation}
    \label{eq:FARPDFproof_32}
       \Expect_{\infty}[T(\bar D_{\lambda_A}) | \bar D_{\lambda_A} < 0] \geq  \frac{\Expect_{\infty}\left[\left| \bar Z(X_1)^{h+}\right| \Big |  \bar Z(X_1) < 0 \right]}{\mu} \Prob_\infty(\bar Z (X_1) < 0). 
\end{equation}
Note that these bounds do not depend on the false alarm threshold $A$. Using these bounds, we obtain an upper bound on the $\pdc$ that is not a function of the threshold $A$. Specifically, we have 
\begin{equation}
\label{eq:FARPDFproof_5}
    \begin{split}
        \pdc(\bar{\Psi}_{\text{RDC}}(A, \mu, h)) &= 
\frac{\Expect_{\infty}[\lambda_A | \bar D_{\lambda_A} < 0]}{\Expect_{\infty}[\lambda_A | \bar D_{\lambda_A} < 0] + \Expect_{\infty}[T(\bar D_{\lambda_A}) | \bar D_{\lambda_A} < 0]} \\
&\leq \frac{\frac{\Expect_\infty[\lambda_\infty]}{\Prob_\infty(\bar Z (X_1) < 0)}}{\frac{\Expect_\infty[\lambda_\infty]}{\Prob_\infty(\bar Z (X_1) < 0)} + \frac{\Expect_{\infty}\left[\left| \bar Z(X_1)^{h+}\right| \Big |  \bar Z(X_1) < 0 \right]}{\mu} \Prob_\infty(\bar Z (X_1) < 0)} \\
&= \frac{\Expect_\infty[\lambda_\infty]}{\Expect_\infty[\lambda_\infty] + \frac{1}{\mu}\Expect_{\infty}\left[\left| \bar Z(X_1)^{h+}\right| \Big |  \bar Z(X_1) < 0 \right]\Prob_\infty(\bar Z (X_1) < 0)^2}. 
    \end{split}
\end{equation}
Thus, to satisfy 
\begin{align*}
    \pdc(\bar \Psi_{\textnormal{RDC}}(A, \mu, h)) \leq \beta,
\end{align*}
we must have a $\mu$ that satisfies
\begin{align*}
    \mu \leq \frac{\beta}{1 - \beta} \cdot \frac{\Expect_{\infty}\left[\left| \bar Z(X_1)^{h+}\right| \mid| \bar Z(X_1) < 0 \right] \cdot \Prob_{\infty}\left(\bar Z(X_1) < 0 \right)^2}{\Expect_{\infty}[\lambda_{\infty}]}.
\end{align*}
To prove the theorem, it is enough to set 
\begin{equation}
    \label{eq:FARPDFproof_4}
    C_1 := \Expect_\infty[\lambda_\infty], \quad \quad C_2:=\Expect_{\infty}\left[\left| \bar Z(X_1)^{h+}\right| \mid| \bar Z(X_1) < 0 \right] \cdot \Prob_{\infty}\left(\bar Z(X_1) < 0 \right)^2. 
\end{equation}

To prove the corollary, we take $A \to \infty$ and $h \to \infty$ in the expression of the $\pdc$ to get
\begin{align}\label{eq: PDC approx}
      \pdc(\bar{\Psi}_{\text{RDC}}(A, \mu, h)) &= 
\frac{\Expect_{\infty}[\lambda_A | \bar D_{\lambda_A} < 0]}{\Expect_{\infty}[\lambda_A | \bar D_{\lambda_A} < 0] + \Expect_{\infty}[T(\bar D_{\lambda_A}) | \bar D_{\lambda_A} < 0]}  \\
&\rightarrow \frac{\Expect_{\infty}[\lambda_{\infty}]}{\Expect_{\infty}[\lambda_{\infty}] + \Expect_{\infty}\left[  \lceil | \bar D_{\lambda_{\infty}} |/\mu \rceil \right]} \\
&\leq \frac{\Expect_{\infty}[\lambda_{\infty}]}{\Expect_{\infty}[\lambda_{\infty}] + \Expect_{\infty}\left[  | \bar D_{\lambda_{\infty}} |/\mu \right]} \\
&= \frac{\mu}{\mu + \dd_{\textnormal{KL}}(f \; \| \; \bar g)}.
\end{align}
Here, the last equality follows from Wald's lemma \cite{woodroofe1982nonlinear} which gives
$$
\Expect_{\infty}\left[|\bar D_{\lambda_{\infty}}|\right] = \Expect[\lambda_\infty]\; \dd_{\textnormal{KL}}(f \; \| \; \bar g). 
$$ 
This completes the proof of the corollary. 



\section{Proof of Theorem~\ref{thm:DelayAnalysis}}
\label{appendix 1}
We first obtain a bound on 
\begin{align}
\label{eq:delayappendix1}
    \wadd^{G}(\bar\tau_{rdc}) = \sup_k \esssup \Expect_k^{G}[(\bar\tau_{rdc} - k + 1)^+ | \mathscr{I}_{k-1}]
\end{align}
for any fixed $G \in \mathcal{G}$. For a fixed $k$, the conditional expectation $
\Expect_k^{G}[(\bar\tau_{rdc} - k + 1)^+ | \mathscr{I}_{k-1}]
$
depends on the past history $\mathscr{I}_{k-1}$ only through the value of the statistic $\bar D_{k-1}$ at time $k-1$. As a result, by repeating the arguments similar to those used in Lemma 5 of \cite{banerjee2013data}, it follows that
$$
\Expect_k^{G}[(\bar\tau_{rdc} - k + 1)^+ | \mathscr{I}_{k-1}] \leq \Expect_1^{G}[\bar\tau_{rdc}] + \lceil h/\mu \rceil. 
$$
Since this upper bound no longer depends on the past history or hypothesized change point $k$, we have
\begin{align}
\label{eq:delayappendix2}
    \wadd^{G}(\bar\tau_{rdc}) \leq \Expect_1^{G}[\bar\tau_{rdc}] + \lceil h/\mu \rceil. 
\end{align}

Let $\tau_{rc}$ be the robust CUSUM stopping rule of \cite{unni-etal-ieeeit-2011} and let $\bar W_n$ be the corresponding statistic:
\begin{equation} 
\label{eq:delayappendix3}
    \begin{split}
        \bar W_{n+1} &= (\bar W_n + \log[\bar g(X_{n+1})/f(X_{n+1})])^+, \quad \bar W_0=0,\\
        \tau_{rc} &= \inf\{n \geq 1: \bar W_n \geq A\}.
    \end{split}
\end{equation}
We note that the evolution of RDE-CUSUM statistic $\bar D_n$ and the Robust CUSUM statistic $\bar W_n$ are statistically identical, except of the sojoruns of the statistic $\bar D_n$ below zero. Also, each time the statistic $\bar D_n$ goes below zero, the number of consecutive samples skipped is bounded by $\lceil h/\mu \rceil$. Also, each time $\bar D_n$ comes above zero after being below zero, it restarts at $0$ leading to a new renewal cycle. From here onward, as long as the statistic is above zero, the evolution of $\bar D_n$ is statistically identical to the evolution of $\bar W_n$. Consequently, if  
\begin{align}
\label{eq:delayappendix4}
    \lambda_A = \inf\left\{n \geq 1 : \bar D_n \notin [0, A)\right\},
\end{align}
then the number of times the RDE-CUSUM statistic will go below zero, before hitting the threshold $A$ before hitting zero under the post-change regime, is a geometric random variable with probability 
$$
\Prob_1^G(\bar D_{\lambda_A} \geq A). 
$$
Thus, the mean number of cycles below zero of the RDE-CUSUM statistic is the mean of this geometric random variable:
$$
\frac{1}{\Prob_1^G(\bar D_{\lambda_A} \geq A)}. 
$$
Substituting this in \eqref{eq:delayappendix2} we get
\begin{align}
\label{eq:delayappendix5}
    \wadd^{G}(\bar\tau_{rdc}) &\leq \Expect_1^{G}[\bar\tau_{rdc}] + \lceil h/\mu \rceil \\
    &\leq\Expect_1^{G}[\bar\tau_{rc}] + \frac{\lceil h/\mu\rceil}{\Prob_1^G(\bar D_{\lambda_A} \geq A)} + \lceil h/\mu \rceil. 
\end{align}
Thus,
\begin{equation}
\label{eq:delayappendix6}
    \sup_G \wadd^{G}(\bar\tau_{rdc}) 
    \leq \sup_G \Expect_1^{G}[\bar\tau_{rc}] + \frac{\lceil h/\mu\rceil}{\inf_G \Prob_1^G(\bar D_{\lambda_A} \geq A)} + \lceil h/\mu \rceil. 
\end{equation}

Now, because of Assumption~\ref{assumption1}, we have (see \cite{unni-etal-ieeeit-2011})
\begin{equation}
\label{eq:delayappendix7}
\sup_G \Expect_1^{G}[\bar\tau_{rc}] = \Expect_1^{\bar G}[\bar\tau_{rc}]. 
\end{equation}
Also by the same Assumption~\ref{assumption1}, we have
\begin{equation}
    \label{eq:delayappendix8}
    \begin{split}
    \Prob_1^G(\bar D_{\lambda_A} \geq A) &\stackrel{(a)}{\geq} \Prob_1^G(\bar D_{n} \geq 0, \; \text{ for all }n) \\
    &\stackrel{(b)}{=} \lim_{N \to \infty}\Prob_1^G(\bar D_{n} \geq 0, \; \text{ for all }n \leq N) \\
    &\stackrel{(c)}{=} \lim_{N \to \infty}\Prob_1^G \left(\max_{1 \leq n \leq N} \max_{1 \leq k \leq n} \sum_{i=k}^n \log [\bar g(X_i)/f(X_i)] \geq 0\right) \\
    &\stackrel{(d)}{\geq} \lim_{N \to \infty}\Prob_1^{\bar G} \left(\max_{1 \leq n \leq N} \max_{1 \leq k \leq n} \sum_{i=k}^n \log [\bar g(X_i)/f(X_i)] \geq 0\right) \\
    &=\Prob_1^{\bar G} \left(\max_{1 \leq k\leq n} \sum_{i=k}^n \log [\bar g(X_i)/f(X_i)] \geq 0, \text{ for all }n\right)\\
    &\stackrel{(e)}{>} 0.  
    \end{split}
\end{equation}
In the above equation, inequality $(a)$ is obvious. The equality $(b)$ follows from the continuity of probability. The equality $(c)$ follows from the definition of RDE-CUSUM statistic, which is the same as the robust CUSUM statistic of \cite{unni-etal-ieeeit-2011} when the statistic never goes below zero. The inequality $(d)$ is true because of Lemma III.1 in \cite{unni-etal-ieeeit-2011} as the max and sum operations are monotone increasing in its arguments and because $\bar G$ is the LFL. Finally, the last inequality $(e)$ showing that the probability $\Prob_1^{\bar G} \left(\max_{1 \leq k \leq n} \sum_{i=k}^n \log [\bar g(X_i)/f(X_i)] \geq 0, \text{ for all }n\right)$ is strictly positive follows from the property of a random walk under positive drift \cite{woodroofe1982nonlinear}. Define
\begin{equation}
\label{eq:delayappendix9}
    q := \Prob_1^{\bar G} \left(\max_{1 \leq \leq n} \sum_{i=k}^n \log [\bar g(X_i)/f(X_i)] \geq 0, \text{ for all }n\right) > 0.  
\end{equation}
Then \eqref{eq:delayappendix8} shows that
\begin{equation}
    \label{eq:delayappendix10}
    \inf_{G \in \mathcal{G}}\Prob_1^G(\bar D_{\lambda_A} \geq A) \geq q > 0,
\end{equation}
where $q$ depends on $\bar G$ and not on any other particular $G$. 
Substituting \eqref{eq:delayappendix7}, \eqref{eq:delayappendix8},  \eqref{eq:delayappendix9}, and \eqref{eq:delayappendix10} in \eqref{eq:delayappendix6} we get
\begin{equation}
\label{eq:delayappendix11}
\begin{split}
    \sup_G \wadd^{G}(\bar\tau_{rdc}) 
    &\leq \sup_G \Expect_1^{G}[\bar\tau_{rc}] + \frac{\lceil h/\mu\rceil}{\inf_G \Prob_1^G(\bar D_{\lambda_A} \geq A)} + \lceil h/\mu \rceil \\
    &= \Expect_1^{\bar G}[\bar\tau_{rc}] + \frac{\lceil h/\mu\rceil}{\inf_G \Prob_1^G(\bar D_{\lambda_A} \geq A)}+ \lceil h/\mu \rceil \\
    &\leq \Expect_1^{\bar G}[\bar\tau_{rc}] + \frac{\lceil h/\mu\rceil}{q}+ \lceil h/\mu \rceil \\
    &\leq \frac{A}{\kl(\bar g \ \| \ f)}(1+o(1)) + \frac{\lceil h/\mu\rceil}{q} + \lceil h/\mu \rceil.
    \end{split} 
\end{equation}
Here the last inequality follows from the analysis of the CUSUM algorithm.  This completes the proof of Theorem~\ref{thm:DelayAnalysis}.

\section*{Acknowledgment}
Taposh Banerjee was supported in part by the U.S. National Science Foundation under grant 2427909. Hoda Bidkhori was supported in part by the U.S. National Science Foundation under grant 2427910. 

\bibliographystyle{apalike}
\bibliography{QCD.bib}

\end{document}